\crefname{algocf}{alg.}{algs.}
\Crefname{algocf}{Algorithm}{Algorithms}
\newtheorem{theorem}{Theorem}
\newtheorem{proposition}{Proposition}
\newtheorem{lemma}{Lemma}
\newcommand{\ket}[1]{|#1\rangle}
\newcommand{\Z}{{\mathbb{Z}}}
\newcommand{\F}{{\mathbb{F}}}
\newcommand{\HH}{{\bf{H}}}
\newcommand{\polyA}{{\cal A}}
\newcommand{\polyB}{{\cal B}}
\newcommand{\polyP}{{\cal P}}
\newcommand{\module}{M}
\newcommand{\datamodule}{M^{d}}
\newcommand{\ancillamodule}{M^{a}}
\begin{document}

\title{Distributed fault-tolerant quantum memories over a $2 \times L$ array of qubit modules}

\author{Edwin Tham}
\thanks{These authors contributed equally to this work.}
\author{Min Ye}
\thanks{These authors contributed equally to this work.}
\author{Ilia Khait}
\author{John Gamble}
\author{Nicolas Delfosse}
\affiliation{
    IonQ Inc.
}

\date{\today}

\begin{abstract}
We propose an architecture for a quantum memory distributed over a $2 \times L$ array of modules equipped with a cyclic shift implemented via flying qubits.
The logical information is distributed across the first row of $L$ modules and quantum error correction is executed using ancilla modules on the second row equipped with a cyclic shift.
This work proves that quantum LDPC codes such as BB codes can maintain their performance in a distributed setting while using solely one simple connector: a cyclic shift.
We propose two strategies to perform quantum error correction on a $2 \times L$ module array:
(i) The cyclic layout which applies to any stabilizer codes, whereas previous results for qubit arrays are limited to CSS codes.
(ii) The sparse cyclic layout, specific to bivariate bicycle (BB) codes.
For the $[[144,12,12]]$ BB code, using the sparse cyclic layout we obtain a quantum memory with $12$ logical qubits distributed over $12$ modules, containing $12$ physical qubits each.
We propose physical implementations of this architecture using flying qubits, that can be faithfully transported, and include qubits encoded in ions, neutral atoms, electrons or photons.
We performed numerical simulations when modules are long ion chains and when modules are single-qubit arrays of ions showing that the distributed BB code achieves a logical error rate below $2 \cdot 10^{-6}$ when the physical error rate is $10^{-3}$.
\end{abstract}

\maketitle

\section{Introduction}

Large-scale quantum applications might require millions of physical qubits, due to the large overhead of quantum error correction and fault-tolerance~\cite{reiher2017elucidating, beverland2022assessing, dalzell2023quantum, gidney2025factor, zhou2025resource}.
Modular designs are appealing because they simplify the manufacture, testing and characterization of large-scale quantum chips.
Modularity circumvents issues specific to certain implementation modalities as well: the spectral crowding of collective motional modes for trapped ions~\cite{landsman2019two, shapira2023fast}, the dropping yield of superconducting chips~\cite{zeissler2024superconducting, ang2024arquin}, the laser power limitation of neutral atoms~\cite{henriet2020quantum}, and cryogenic scaling requirements~\cite{fellous2023optimizing}.
However, it also gives rise to two critical challenges: connecting the modules and designing a distributed architecture for fault-tolerant quantum computing.

A popular approach to distributed quantum computing is based on small modules connected through noisy links~\cite{jiang2007distributed, li2012high, fujii2012distributed, monroe2014large}, with entanglement distillation~\cite{bennett1996purification} used to extract high-fidelity gates from these links.
Work on distributed surface codes~\cite{ramette2024fault, de2024thresholds} and Floquet codes~\cite{sutcliffe2025distributed} shows that these codes perform well even when a small fraction of the gates are implemented through very noisy links without distillation.
However, this approach seems challenging for general quantum low-density parity-check (LDPC) codes~\cite{breuckmann2021quantum} because the qubit connectivity they require is typically an expander graph, meaning that it cannot be easily partitioned into modules with few connections between the modules~\cite{hoory2006expander}.

In the present work, the module connection is established by physically moving the qubits.
The ability to reliably transport qubits was identified by DiVicenzo as an essential requirement for qubits used for quantum computation and communication and he named them {\em flying qubits}~\cite{divincenzo2000physical}.
They include photonic qubits~\cite{knill2001scheme}, spin qubits~\cite{loss1998quantum}, electron on liquid helium~\cite{lyon2006spin}, trapped ions~\cite{cirac1995quantum} and neutral atoms~\cite{bluvstein2024logical}.
Here, we propose a distributed quantum error correction scheme supported on a $2 \times L$ array of modules connected through a cyclic shift of the modules implemented using flying qubits.
We simulated the performance of distributed bivariate bicycle codes~\cite{bravyi2024high} for this architecture where the modules are with long ion chains and when modules are one-dimensional arrays of ions.
The results show that our modular quantum memory can reach the low logical error rates required for large-scale applications.

In the remainder of this paper, \cref{sec:The 2xL model} proposes an abstract model for a $2 \times L$ modules array. \cref{sec:The cyclic layout} introduces the cyclic layout which allows for the implementation of the syndrome extraction circuit of any stabilizer code. A sparse cyclic layout, producing a constant-depth syndrome extraction circuit for BB codes is proposed in \cref{sec:The sparse cyclic layout}.
Potential physical implementations and numerical simulations are discussed in \cref{sec:physical implementation} and \cref{appendix:Flat implementation}.

\section{The $2 \times L$ model}
\label{sec:The 2xL model}

We consider a $2 \times L$ array equipped with a cyclic shift where each cell may contain a register of qubits that we call a {\em module}.
This generalizes the $2 \times L$ array of qubits introduced in~\cite{siegel2024towards}. 
We refer to this generalization as a {\em $2 \times L$ module array} and we use the term {\em $2 \times L$ qubit array} for the original model which corresponds to single-qubit modules.

The cells of the array are labeled $(b, i) \in \Z_2 \times \Z_L$, where cells $(0, i), i\in\Z_L$ form the {\em fixed row} and cells $(1, i), i\in\Z_L$ form the {\em moving row}.
Each cell is either empty or it contains an $n$-qubit module.
For simplicity, we assume that all the modules are identical $n$-qubit registers.

The {\em qubit operations} available are
preparation or reset of a qubit in a single-qubit state,
measurement of a qubit,
single-qubit unitary gates,
and two-qubit unitary gates supported inside a module or in a pair of {\em aligned modules}, that is acting on qubits in cells $(0,i)$ and $(1, i)$.

A {\em cyclic shift} with size $s$, or {\em $s$-shift}, moves all the modules of the moving row by $s$ steps to the right in a cyclic way, where $s$ is any integer.
The module in cell $i$ of the moving row is transported to the cell $(i + s) \mod L$.

We assume that operations acting on different cells can be performed simultaneously.
Moreover, the measurement and a reset of a qubit can be performed in a single step.
Any cyclic shift has depth one, independently of the shift size $s$. 
That is, the shift operation duration is independent of the physical distance of the shift.
Depending on the details of the gate operations, transport speeds, and other modality-specific physical details, this assumption may break down.
We discuss its validity further in ~\cref{sec:physical implementation}.

In \cref{sec:The cyclic layout,,sec:The sparse cyclic layout}, we assume that each module or pair of aligned modules forms a {\em fully connected and fully parallel} qubit register, meaning that any set of two-qubit gates with disjoint supports can be executed in depth one.
We study other cases in \cref{sec:physical implementation} and \cref{appendix:Flat implementation}.

\section{The cyclic layout}
\label{sec:The cyclic layout}

The {\em cyclic layout}, described in \cref{algorithm:cyclic_layout}, performs the measurement of any sequence of Pauli operators on a $2 \times L$ module array.
It implements the syndrome extraction circuit of any stabilizer code by providing as an input the code's stabilizer generators (repeated $T$ times to perform $T$ rounds of syndrome extraction).

\begin{algorithm}
\DontPrintSemicolon
\SetAlgoLined
\KwIn{
A $2 \times L$ module array.
A list of $N$-qubit Pauli operators $P_0, P_1, \dots P_{r-1}$ supported on the first $L-1$ cells of the fixed row.}
\KwOut{A quantum circuit measuring the input Pauli operators over the $2 \times L$ module array.}
    Assign the identity operator $I$ to all the qubits of the moving row and define $P_t := I$ for all $t > r-1$.\;
    Let $\module$ be the last module of the moving row.\;    
    Prepare all the qubit of $\module$ in the $\ket +$ state.\;
    Assign the $P_0, P_1, \dots, P_{n-1}$ to the qubits of $\module$ and mark them.\;
    \For{$t=1,2,\dots, \left \lceil r/n \right \rceil + L$}{
        Apply a $1$-shift.\;
        \For{all module $\module$ on the first $L-1$ cells of the moving row}{
            Let $\module'$ be the module aligned with $\module$.\;
            \For{qubit $i$ in $\module$ and qubit $j$ in $\module'$}{
                If the operator assigned to qubit $i$ acts as $Q_j \neq I$ on qubit $j$, apply a controlled-$Q_j$ gate controlled on qubit $i$ targeting qubit~$j$.
            }
        }
        Let $\module$ be the last module of the moving row.\;
        Measure and reset all the qubits of $\module$ in the $X$ basis.\;
        Assign the first $n$ unmarked operators $P_i$ to the qubits of $\module$ and mark them.\;
    }
\caption{Cyclic layout for stabilizer codes.}
\label{algorithm:cyclic_layout}
\end{algorithm}

Consider an $N$-qubit Pauli operator $Q = Q_1 \times \dots \otimes Q_N$ where $Q_j$ is a Pauli matrix and refer to the $N$ qubits supporting $Q$ as the data qubits.
One can perform the measurement of $Q$ in three steps as follows:
(i) prepare an ancilla qubit in the state $\ket +$,
(ii) apply a sequence of controlled-$Q_j$ gates controlled on the ancilla qubit and targeting the $j$ th data qubit for $1\le j\le N$,
(iii) measure the ancilla qubit in the $X$ basis.

\cref{algorithm:cyclic_layout} measures simultaneously Pauli operators supported on the fixed row of a $2 \times L$ module array using ancilla qubits placed on the moving row.
The main challenge is to design a sequence of cyclic shifts that allows for the implementation of the two-qubit gates required for the measurement of all the Pauli operators without swapping gates associated to different operators because these gates generally do not commute.
To obtain this property, the loop of step 10 is always executed in the same order.

\begin{proposition}\label{proposition:cyclicstab}
\cref{algorithm:cyclic_layout} performs the measurement of $r$ $N$-qubit Pauli operators on a $2 \times L$ array of $n$-qubit modules in depth at most 
$
3 + (\lceil r/n \rceil + L-1)(n+1)
$.
\end{proposition}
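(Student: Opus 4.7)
The plan is to split the proof into a correctness argument and a depth count.

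For correctness, I would show that the algorithm implements, for each $P_t$, the standard ancilla-based protocol recalled just before the proposition. The ancilla assigned to $P_t$ is prepared in $\ket{+}$ either in step 3 or when its host module is measured, reset, and reassigned in steps 13--14. As its module shifts through cells $(1,0), (1,1), \dots, (1,L-2)$ over $L-1$ successive iterations, it aligns consecutively with every data module on the fixed row and, via steps 8--11, applies a controlled-$(P_t)_j$ gate for each data qubit $j$ in the support of $P_t$. These gates share a common control and have distinct targets, so they commute and compose to a single controlled-$P_t$ gate. The ancilla is then measured in the $X$ basis when its module reaches $(1,L-1)$. The stipulation that the inner loop of step 10 is always traversed in the same order ensures that controlled single-qubit Paulis targeting a common data qubit from different ancillas are applied in a consistent, deterministic order across iterations, which is what makes the interleaved circuit equivalent to the simultaneous-measurement protocol for the commuting family $P_0, \dots, P_{r-1}$.

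For the depth I would proceed iteration by iteration. Step 3 contributes depth $1$. Within a single iteration of the main loop, the cyclic shift has depth $1$. The gate block in steps 7--11 acts on $L-1$ aligned module pairs which occupy pairwise disjoint cells and therefore execute in parallel; within a single pair, up to $n^2$ gates form a complete bipartite pattern that a cyclic matching schedule (at time step $k$, ancilla $i$ targets data qubit $(i+k) \bmod n$) executes in depth $n$. The measure-and-reset on cell $(1,L-1)$ in steps 12--13 is disjoint from every cell touched by the gate block, so it runs in parallel with it. Each iteration therefore has depth at most $1 + \max(n,1) = n+1$. The last iteration is the key refinement: by the time we reach iteration $\lceil r/n \rceil + L$, every batch of ancillas has already been measured in a previous iteration, so all modules on the first $L-1$ cells of the moving row carry only the identity operator, the gate block is trivial, and the iteration contributes only $1+1 = 2$. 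Summing yields $1 + (\lceil r/n \rceil + L - 1)(n+1) + 2 = 3 + (\lceil r/n \rceil + L - 1)(n+1)$.

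The main obstacle is the correctness claim: the algorithm interleaves controlled single-qubit Paulis from many different ancillas in a single time step, and two such gates that share a target but not a control generally fail to commute, even when the full Pauli operators they belong to do commute. The consistent ordering enforced on the inner loop is precisely what makes the interleaved circuit equivalent, in the Heisenberg picture, to the well-defined product of the full controlled-$P_t$ gates, so that $U^\dagger X_{a_t} U = P_t \otimes X_{a_t}$ holds for each $t$ and the measurement outcomes are indeed the eigenvalues of the $P_t$.
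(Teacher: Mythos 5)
Your proposal is correct and follows essentially the same route as the paper's proof: correctness via the observation that the fixed ordering of the gate loop makes the interleaved circuit equivalent to the sequential measurement of the $P_t$, and the depth count as $1$ for the preparation, $n+1$ per iteration (shift plus a depth-$n$ gate block, with the measure-and-reset on the disjoint last cell absorbed in parallel), and $2$ for the final gate-free iteration. The only additions are harmless elaborations the paper leaves implicit, such as the explicit cyclic matching schedule realizing depth $n$ for the bipartite gate pattern within an aligned module pair.
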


\begin{proof}
Consider two operators $P_t$ and $P_{t'}$ with $t < t'$.
If $P_t$ and $P_{t'}$ are assigned to two ancilla qubits of the same module $\module$,
then all the controlled-Pauli gates associated with $P_t$ are executed before the controlled-Pauli gates associated with $P_{t'}$ in step 10.
Assume now that $P_t$ and $P_{t'}$ are assigned to ancilla qubits in different modules $\module_t$ and $\module_{t'}$, where $\module_{t}$ is reset before $\module_{t'}$.
Again, the controlled-Pauli gates controlled on $\module_t$ targeting a given module are performed before the gates controlled on $\module_{t'}$ targeting the same module.
This proves that the circuit is equivalent to the sequential measurement of the Pauli operators.

After the first preparation, for 
$t=1,2,\dots, \lceil r/n \rceil + L - 1$,
we perform a cyclic shift, a sequence of two-qubit gates acting on the first $L-1$ pairs of aligned modules, and a measurement and reset on the last cell of the moving row. 
The two-qubit gates can be implemented in depth at most $n$ and the measurement and reset can be performed at the same time.
For the last value of $t$, there are no more two-qubit gates to execute.
This yields the upper bound 
$
1 + (\lceil r/n \rceil + L-1)(n+1) + 2
$
on the depth.
\end{proof}

The main advantage of \cref{algorithm:cyclic_layout} is that it applies to any stabilizer code.
It is practically relevant for small codes. 
However, when the number of stabilizer generators $s \rightarrow + \infty$, the syndrome extraction depth becomes too large, degrading the code performance.
Indeed, the bound on the depth per round tends to
$
\frac{(n+1)}{n}s
$.

\section{The sparse cyclic layout}
\label{sec:The sparse cyclic layout}

The sparse cyclic layout produces a short-depth syndrome extraction circuit for BB codes~\cite{bravyi2024high, kovalev2013quantum}.

Denote by $S_\ell$ the $\ell \times \ell$ circulant matrix with first row $(0 1 0 \dots 0)$ and let 
$x = S_\ell\otimes I_m$ and $y = I_{\ell}\otimes S_m$.
The BB code associated with the polynomials 
$\polyA, \polyB \in \F_2[x,y]$ is defined to be the CSS code~\cite{calderbank1996good, steane1996multiple} with parity-check matrices $\HH_X=[\polyA | \polyB]$ and $\HH_Z = [\polyB^T | \polyA^T]$.
Therein, $\polyA$ and $\polyB$ are sums of matrices of the form $x^i y^j$.
In~\cite{bravyi2024high}, these polynomials are constrained to have exactly three terms, and each term is a power of either $x$ or $y$. 
Here, we allow for any polynomial, which allows one to reach better code parameters~\cite{voss2024multivariate, ye2025quantum}.

Given a polynomial
$
\polyP(x,y) = x^{i_1}y^{j_1} + \dots + x^{i_t}y^{j_t}
$
define $I(\polyP) := \{i_1, i_2, \dots, i_t\}$ and $J(\polyP) := \{j_1, j_2, \dots, j_t\}$ to be the set of distinct exponents of $x$ and $y$ in $\polyP$. 
Based on $\left(x^i z^j\right)^T = x^{-i}y^{-j}$, the set $I(\polyP^T)$ and $J(\polyP^T)$ are obtained by replacing the elements of $I(\polyP)$ and $J(\polyP)$ by their opposite.

Any $k \in \{0,1,\dots \ell m - 1\}$ can be mapped onto the element $\left( \lfloor k / m \rfloor, k \mod m \right)$ of $G_{\ell, m} := \Z_\ell \times \Z_m$.
This bijection allows us to label rows and columns of a matrix $x^i y^j$ with elements of $G_{\ell, m}$.
Examining the matrix $x^i y^j$, we obtain the following lemma where $\oplus$ denotes the addition modulo $\ell$ or modulo $m$. The modulus is clear from the context.

\begin{lemma}
\label{lemma:coefficient_of_xi_yj}
The coefficient of the matrix $x^i y^j$ in row $(v,w) \in G_{\ell, m}$ and column $(v',w') \in G_{\ell, m}$ is 1 iff $(v',w') = (v \oplus i,w \oplus j)$.
\end{lemma}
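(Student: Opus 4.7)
The plan is a direct computation using the Kronecker product structure of $x$ and $y$. First I would observe that $x$ and $y$ commute, since $x = S_\ell \otimes I_m$ and $y = I_\ell \otimes S_m$ and the mixed-product property gives
\[
x^i y^j = (S_\ell^i \otimes I_m)(I_\ell \otimes S_m^j) = S_\ell^i \otimes S_m^j.
\]
So the problem reduces to identifying the nonzero entries of $S_\ell^i$ and $S_m^j$ separately and then applying the standard indexing rule for Kronecker products.

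Next I would establish, by a short induction on $i$, that the matrix $S_\ell^i$ has entry $1$ in row $v$ and column $v'$ iff $v' = v \oplus i$ (addition mod $\ell$), and similarly for $S_m^j$ with mod $m$. The base case $i = 1$ is exactly the definition of $S_\ell$ (first row $(0,1,0,\dots,0)$, cyclic), and the inductive step follows from matrix multiplication together with the fact that each row and column of a power of the shift matrix contains a single $1$.

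Then I would invoke the bijection $k \leftrightarrow (\lfloor k/m \rfloor, k \bmod m)$ stated just before the lemma to identify rows and columns of $x^i y^j$ with pairs in $G_{\ell,m}$, and apply the Kronecker product entry formula
\[
(S_\ell^i \otimes S_m^j)_{(v,w),(v',w')} = (S_\ell^i)_{v,v'} \cdot (S_m^j)_{w,w'}.
\]
This product equals $1$ iff both factors are $1$, which by the previous step happens iff $v' = v \oplus i$ and $w' = w \oplus j$. That gives the claimed characterization.

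I do not expect any real obstacle: the only thing to be careful about is matching the indexing convention used for the Kronecker product with the bijection $k \mapsto (\lfloor k/m \rfloor, k \bmod m)$, so that the first factor of $G_{\ell,m}$ corresponds to the $S_\ell$ block structure and the second to $S_m$. Once that is fixed, the proof is a one-line verification.
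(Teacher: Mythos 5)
Your proof is correct and fills in exactly the computation the paper leaves implicit: the paper simply asserts the lemma "by examining the matrix $x^i y^j$," and your route via $x^i y^j = S_\ell^i \otimes S_m^j$, the shift property of $S_\ell^i$ and $S_m^j$, and the Kronecker indexing matching the bijection $k \mapsto (\lfloor k/m \rfloor, k \bmod m)$ is precisely that examination made explicit. No gaps.
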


Extending the previous bijection, we label the code's data qubits with $G_{2,\ell, m} := \Z_2 \times \Z_\ell \times \Z_m$.
The triple $(u,v,w)$ corresponds to the data qubit with index $u \ell m + v m + w$.
The ancilla qubits, which correspond to the rows of $\HH_X$ and $\HH_Z$, are labeled respectively as $(X,v,w)$ and $(Z,v,w)$ with $(v,w) \in G_{\ell,m}$.

Define the {\em data modules}
$
\datamodule_{w} := \Z_2 \times \Z_\ell  \times \{w\}
$
indexed by $w \in \Z_m$, which we interpret as sets of data qubits. 
Define the {\em ancilla modules} 
$
\ancillamodule_{w} := \{X,Z\} \times \Z_\ell  \times \{w\},
$
also indexed by $w \in \Z_m$.

These modules form a $2 \times m$ array with $2 \ell$-qubit modules.
Modules $\datamodule_{w}$ and $\ancillamodule_{w}$ are initially placed in cell $w$ of the fixed row and the moving row respectively.

\begin{algorithm}
\DontPrintSemicolon
\SetAlgoLined
\KwIn{A BB code.\;}
\KwOut{A circuit measuring the $X$ stabilizer generators of the input code over the $2 \times m$ module array.\;}
Prepare all the $X$ ancilla qubits in the state $\ket +$.\;
\For{$j \in J(\polyA) \cup J(\polyB)$}{
    Apply the cyclic shift aligning $\ancillamodule_{0}$ and $\datamodule_{j}$.\;
    \For{$i \in \Z_m$ such that $x^i y^j$ appears in $\polyA$}{
        \For{$v, w \in G_{\ell, m}$}{
        Apply the CX gate controlled on qubit $(X,v,w)$ targeting qubit $(0, v \oplus i, w \oplus j)$.\;
        }
    }
    \For{$i \in \Z_m$ such that $x^i y^j$ appears in $\polyB$}{
        \For{$v, w \in G_{\ell, m}$}{
        Apply the CX gate controlled on qubit $(X, v,w)$ targeting qubit $(1,v \oplus i, w \oplus j)$.\;
        }
    }
}
Measure all the ancilla qubits in the $X$ basis.\;
\caption{Sparse cyclic layout for BB codes.}
\label{algorithm:sparse_cyclic_layout}
\end{algorithm}

\begin{proposition}
\cref{algorithm:sparse_cyclic_layout} performs the measurement of the $X$ stabilizer generators of the input BB code.
\end{proposition}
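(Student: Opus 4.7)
The plan is to verify, for each ancilla $(X,v,w)$ independently, that \cref{algorithm:sparse_cyclic_layout} realizes the standard ancilla-based measurement of the Pauli operator specified by row $(v,w)$ of $\HH_X = [\polyA \mid \polyB]$. By \cref{lemma:coefficient_of_xi_yj} applied term by term, that row selects the operator
\[
S^X_{(v,w)} \;=\; \prod_{x^i y^j \in \polyA} X_{(0,v\oplus i,w\oplus j)} \prod_{x^i y^j \in \polyB} X_{(1,v\oplus i,w\oplus j)},
\]
with exactly one factor per monomial of $\polyA$ and of $\polyB$. The textbook recipe for measuring such a Pauli prepares the ancilla in $\ket{+}$, applies a CX from the ancilla to each data qubit in its support, and measures the ancilla in the $X$ basis; these three ingredients bracket the inner loops in step~1 and in the final measurement line of the algorithm.

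Next I would track module positions. Initially $\ancillamodule_w$ occupies cell $w$ of the moving row; the outer-loop shift that aligns $\ancillamodule_0$ with $\datamodule_j$ is uniform, so it carries every $\ancillamodule_w$ to cell $w\oplus j$, aligning it with $\datamodule_{w\oplus j}$. Consequently the inner-loop gate from $(X,v,w) \in \ancillamodule_w$ to $(0, v\oplus i, w\oplus j) \in \datamodule_{w\oplus j}$ acts between aligned modules and is admissible in the $2 \times L$ model. The same alignment check works for the $\polyB$ block with target $(1,v\oplus i,w\oplus j) \in \datamodule_{w\oplus j}$.

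Iterating over $j \in J(\polyA) \cup J(\polyB)$ and the corresponding $i$, each monomial $x^i y^j$ of $\polyA$ produces exactly one CX from $(X,v,w)$ to $(0, v\oplus i, w\oplus j)$, matching precisely one factor of $S^X_{(v,w)}$, and similarly for $\polyB$. All the CXs acting on a given ancilla share that ancilla as control, so their ordering is irrelevant; CXs belonging to different ancillas also commute, since on any common data-qubit target each acts as an $X$ conditioned on its own distinct control. Hence the parallel execution is equivalent to the serial ancilla-measurement recipe for every $(v,w)$ simultaneously, and each ancilla is measured in the eigenbasis of $S^X_{(v,w)}$, which is the claimed simultaneous measurement of all $X$-type stabilizer generators.

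The only real point needing care is the alignment bookkeeping: one must observe that the single cyclic shift sending $\ancillamodule_0$ to $\datamodule_j$ simultaneously aligns $\ancillamodule_w$ with $\datamodule_{w\oplus j}$ for every $w \in \Z_m$, so that the exponent shift $(v,w)\mapsto(v\oplus i, w\oplus j)$ produced by \cref{lemma:coefficient_of_xi_yj} is implemented automatically by the physical layout without any additional routing. Once this correspondence is fixed, everything else is routine verification of the Pauli-measurement recipe.
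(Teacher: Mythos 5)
Your proof is correct and takes essentially the same route as the paper's: apply \cref{lemma:coefficient_of_xi_yj} to read off the support of the $X$ generator in row $(v,w)$ of $\HH_X$, and note that the cyclic shift placing $\ancillamodule_0$ at cell $j$ rigidly carries every $\ancillamodule_w$ onto $\datamodule_{w\oplus j}$ because the addition is modulo $m$, the period of the shift. The only difference is that you spell out the commutativity of the interleaved CX gates (shared controls, or $X$-type actions on a common target), which the paper leaves implicit; this is a welcome but minor addition.
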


\begin{proof}
The CX gates implemented at steps 6 and 9 are valid because the cyclic shift at step 3 aligns modules $\ancillamodule_{w}$ and $\datamodule_{w \oplus j}$ supporting these gates. 
This is because the sum $w \oplus j$ is taken modulo $m$ which coincides with the period of the cyclic shift.

Based on \cref{lemma:coefficient_of_xi_yj}, to measure the $X$ stabilizer generator associated with row $(v,w)$ of $\HH_X$, we need to perform CX gates controlled on qubit $(X,v,w)$ targeting qubit $(0,v \oplus i,w \oplus j)$ for each term $x^iy^j$ in $\polyA$ and $(1, v \oplus i, w \oplus j)$ for each term $x^iy^j$ in $\polyB$.
Theses gates are implemented in steps 6 and 9 of \cref{algorithm:sparse_cyclic_layout}. 
\end{proof}

\cref{algorithm:sparse_cyclic_layout} only describes $X$ stabilizer measurements because $Z$ measurements can be performed similarly.

\begin{theorem}
\label{theorem:constant_depth_BB_syndrome_extraction}
\cref{algorithm:sparse_cyclic_layout} performs the $X$ syndrome extraction of a BB code in depth 
$|J(\polyA) \cup J(\polyB)| + \omega + 2$ 
using a $2 \times m$ module array
where $\omega$ is the weight of the stabilizer generators.
The same holds for the $Z$ syndrome extraction.
\end{theorem}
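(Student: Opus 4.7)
Since correctness of the measurement is already established by the preceding proposition, the plan is to bound the depth by carefully accounting for each layer produced by \cref{algorithm:sparse_cyclic_layout}. The depth decomposes into four contributions: the single layer of ancilla preparations at step 1, the cyclic shifts at step 3, the controlled-$X$ gates at steps 6 and 9, and the final layer of measurements at step 11. The preparation and the measurement each contribute one unit of depth, and there is exactly one shift per iteration of the outer loop, which contributes $|J(\polyA) \cup J(\polyB)|$ by the model's convention that every shift has depth one regardless of its size. The main task is thus to show that the CX gates contribute exactly $\omega$ units of depth.

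To this end, I would first argue that for a single term $x^i y^j$ of $\polyA$, the entire inner double loop of steps 5--6 executes in depth one. The shift at step 3 places $\ancillamodule_w$ in cell $w \oplus j$ of the moving row, so that each CX from $(X, v, w)$ to $(0, v \oplus i, w \oplus j)$ acts within a pair of aligned modules. The control qubits $(X, v, w)$ are pairwise distinct as $(v, w)$ ranges over $G_{\ell, m}$, and because the map $(v, w) \mapsto (v \oplus i, w \oplus j)$ is a bijection on $G_{\ell, m}$, the target qubits are pairwise distinct as well. Under the full-parallelism assumption of \cref{sec:The 2xL model}, all these CX gates execute simultaneously. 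The same argument handles each term of $\polyB$ at steps 8--9.

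Second, I would observe that distinct terms inside the same outer iteration must be sequenced, since any two such terms share every control qubit $(X, v, w)$. Therefore each term in $\polyA$ or $\polyB$ contributes exactly one depth layer of CX gates, summing to $|\polyA| + |\polyB|$ across the whole algorithm. Since each monomial $x^i y^j$ is a permutation matrix, the row weight $\omega$ of $\HH_X = [\polyA \mid \polyB]$ equals $|\polyA| + |\polyB|$, giving a CX depth of $\omega$. Adding the four contributions yields the claimed bound $|J(\polyA) \cup J(\polyB)| + \omega + 2$, and the $Z$ case follows by the same argument applied to $\polyB^T$ and $\polyA^T$, whose term counts and $y$-exponent sets match those of $\polyA$ and $\polyB$ up to sign. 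The main technical point is the bijection argument underlying the depth-one packing of each term; the remainder is bookkeeping.
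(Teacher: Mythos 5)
Your proposal is correct and follows essentially the same route as the paper's proof: the same decomposition into one layer for preparation, one for measurement, $|J(\polyA) \cup J(\polyB)|$ for the shifts, and $\omega$ for the CX gates, with the depth-one packing of each monomial's $\ell m$ gates justified by disjointness of their supports. Your bijection argument for why the targets are pairwise distinct, and the identification $\omega = |\polyA| + |\polyB|$ via the permutation-matrix structure of each monomial, simply make explicit what the paper leaves implicit.
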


By symmetry one can swap the roles of $\ell$ and $m$ in \cref{algorithm:sparse_cyclic_layout}. In this case, the depth in \cref{theorem:constant_depth_BB_syndrome_extraction} becomes $|I(\polyA) \cup I(\polyB)| + \omega + 2$, which may be smaller than $|J(\polyA) \cup J(\polyB)| + \omega + 2$.

Applying \cref{theorem:constant_depth_BB_syndrome_extraction}, we obtain an $X$ or $Z$ syndrome extraction circuit with depth $12$ for all the BB codes of~\cite{bravyi2024high}.

\begin{proof}
The first and last instructions account for two steps and there are a total of $|J(\polyA) \cup J(\polyB)|$ cyclic shifts.
Inside the loop of step 5, we perform $\ell m$ CX gates
which can be implemented simultaneously because they act on disjoint pairs of qubits.
Similarly, the $\ell m$ CX gates in the loop of step 8 can be implemented in depth one.
Therefore, the measurement of the $\ell m$ $X$ stabilizer generators, which requires a total of $\omega \ell m$ CX gates, can be performed in depth $\omega$.

The $Z$ stabilizer measurements are performed similarly based on the transposed matrices $\polyB^T$ and $\polyA^T$. Given that $|J(\polyA^T) \cup J(\polyB^T)|=|J(\polyA) \cup J(\polyB)|$, the $Z$ measurement depth is the same.
\end{proof}

\cref{appendix:interleaved implementation} discusses a variant of \cref{algorithm:sparse_cyclic_layout} with interleaved $X$ and $Z$ measurements achieving a shorter depth.

\section{Physical implementation}
\label{sec:physical implementation}

\begin{figure}[htbp]
    \centering 
    
    \includegraphics[width=.95\linewidth]{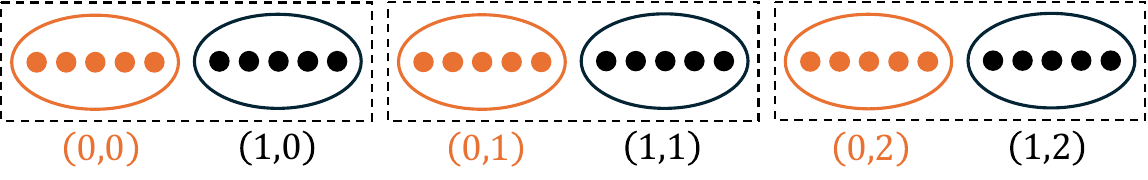}

    \caption{Representation of a $2 \times 3$ module array using a 1D line of qubits with five qubits per module. Modules of the fixed row and moving row are alternating and aligned modules are inside the dashed boxes.}
    \label{fig:one_dime_2xL_module_array}
\end{figure}

Here, we describe a quasi one-dimensional implementation of a $2 \times L$ module array and its cyclic shift using flying qubits which could be photons, electrons, ions or neutral atoms.

\begin{figure}[htbp]
    \centering
    \fbox{
    \includegraphics[width=.9\linewidth]{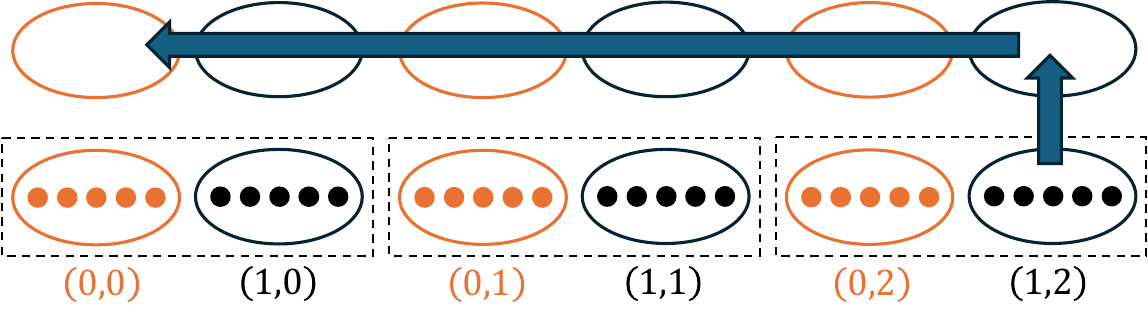}
    }

    (a)

    \fbox{
    \includegraphics[width=.9\linewidth]{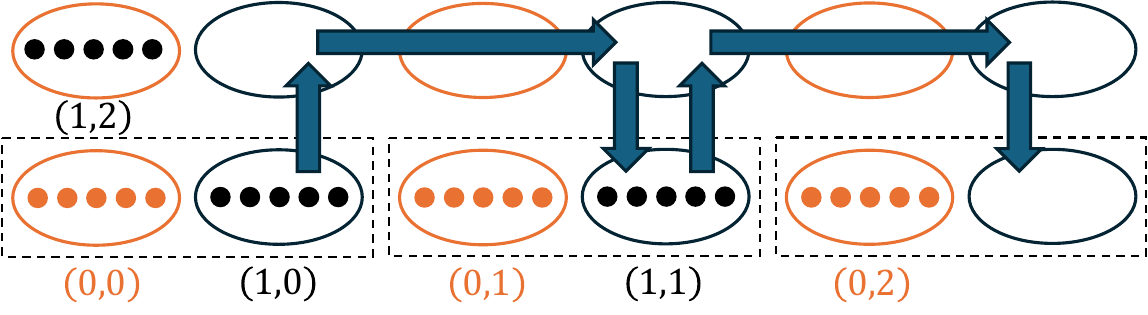}
    }

    (b)

    \fbox{
    \includegraphics[width=.9\linewidth]{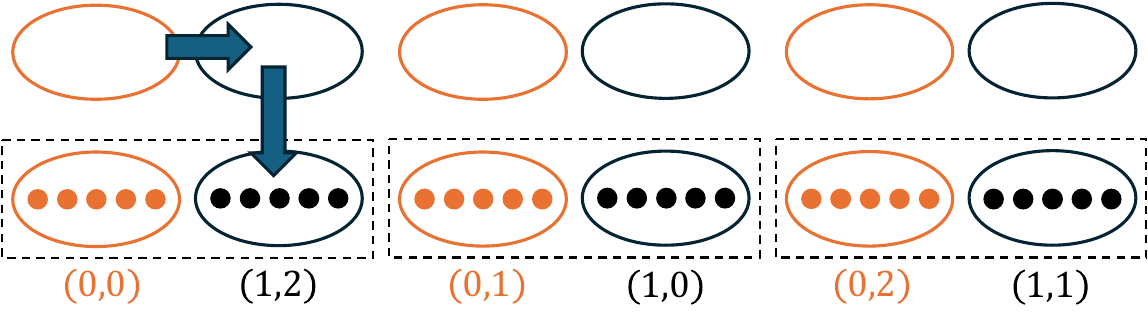}
    }

    (c)

    \caption{Representation of a 1-shift on a $2 \times 3$ module array in three steps using a primary zone (bottom) storing all the modules and a secondary zone for temporary storage.
    }
    \label{fig:phys_imp_cyclic_shift}
\end{figure}

A $2 \times L$ array with $n$-qubit modules is formed using $2Ln$ flying qubits arranged within a line and split into groups of $n$ qubits as shown in \cref{fig:one_dime_2xL_module_array}.
The modules are alternating between modules of the fixed row and the moving row.
We assume that one can perform two-qubit gates between neighboring modules as shown by the dashed boxes in \cref{fig:one_dime_2xL_module_array}.
In practice, this might require bringing the qubits of these two modules closer together, which is not an issue for flying qubits.

We refer to the one-dimensional region holding the modules as the {\em primary zone}.
To facilitate cyclic shifts, we use a parallel {\em secondary zone}, represented in \cref{fig:phys_imp_cyclic_shift}, for temporary storage of the modules.

A $s$-shift is realized in three moves as illustrated in \cref{fig:phys_imp_cyclic_shift}:
(i) Move the last $s$ modules of the moving row to the secondary zone.
(ii) Move the first $L-s$ modules of the moving row forward by $2s$ positions in the primary zone.
(iii) Move the modules present in the secondary zone to the first $s$ cells of the moving row in the primary zone.

Step (ii) is accomplished by moving the relevant modules into the secondary zone, moving them forward, and returning them to the primary zone.
This is more efficient than swapping qubits which requires a number of swaps growing with $s$.
To accommodate size-$s$ shifts, the secondary zone must be physically longer by an extra $s-1$ module cells than what is strictly needed to hold a length-$L$ module array.

Within any of these three steps, all the moves can be performed simultaneously while keeping the modules sufficiently far from each other to avoid unwanted interactions.
Even though qubits must physically be transported across distances proportional to $s$, in practice for modest distances spanning hundreds of $\mu$m, overall transport times and noise remain dominated by fixed-duration processes that are independent of $s$, such as acceleration, deceleration and cooling in the case of ions.
This justifies our assumption that any cyclic shift is implemented in depth one, independently of the shift size $s$.

To assess the performance of this architecture, we performed circuit-level simulations of BB codes with \cref{algorithm:sparse_cyclic_layout} using this implementation of the $2 \times L$ module array and its cyclic shift where each module is a long chain of trapped ions.
We used the chain model of~\cite{ye2025quantum} to simulate qubit operations inside modules.
Two-qubit gates are sequential inside a module but gates acting on distinct modules can be performed simultaneously.
Two-qubit gates have a noise rate $p$, single-qubit operations have a noise rate $p/10$, and idle qubits have a noise rate $p/100$.
We assume $\tau_m = 30$, meaning that unmeasured qubits undergo $30$ rounds of idle noise during a measurement.
Finally, a cyclic shift is followed by depolarizing noise on all the qubits with rate $\tau_s p/100$ with $\tau_s = 30$, which means that all qubits suffer from $\tau_s$ rounds of idle noise.

\cref{fig:simulation_results_long_chain} shows that the BB code with length 144 distributed across 12 ion-chain modules achieves a logical error rate below $2 \cdot 10^{-6}$ for a physical error rate of $10^{-3}$.
A different implementation based on flat modules which are one-dimensional array of qubits is proposed and simulated in \cref{appendix:Flat implementation}.

In \cref{appendix:Impact of modularity}, we analyze the impact of distributing the codes over several modules on the code performance and we observe that it comparable to increase on the physical error rate $p$ by less than $2 \times$.
In \cref{appendix:fitting_formulas}, we provide a fitting formula for the logical error rate of BB codes under the sparse cyclic layout.

\begin{figure}[htbp]
    \centering
    \includegraphics[width=.95\linewidth]{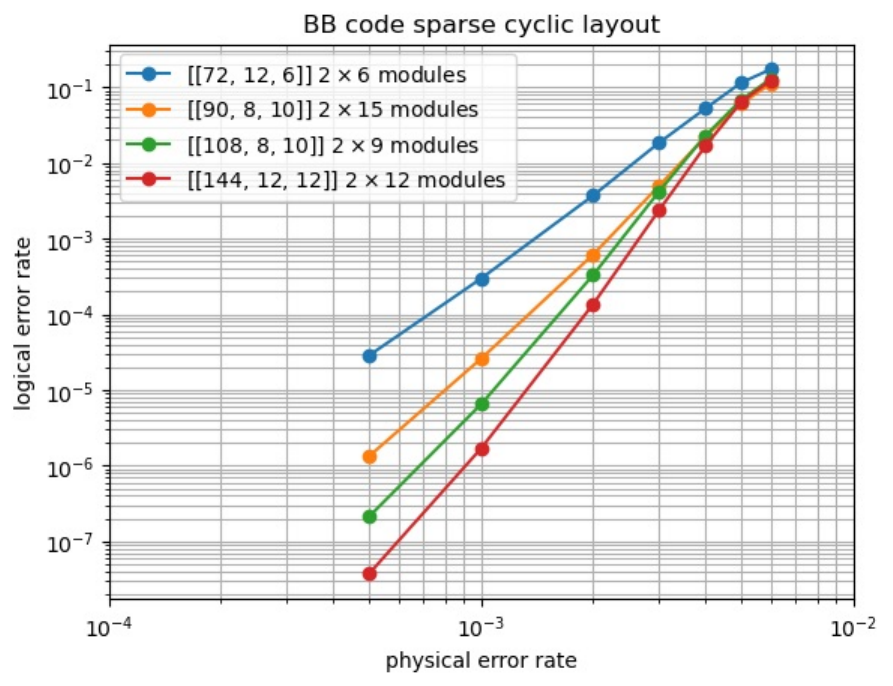}
    \caption{Logical error rate of BB codes with the sparse cyclic layout of \cref{algorithm:sparse_cyclic_layout} distributed over a module array where each module is a long chain of trapped ions.}
    \label{fig:simulation_results_long_chain}
\end{figure}

\section{Conclusion}
\label{sec:conclusion}

We proposed a design for a distributed quantum memory implemented with flying qubits.
Although we use DiVicenzo's concept of flying qubits, our architecture only requires a planar motion of the qubits, which we may call {\em movable qubits}, making it well-suited to electrons, ions and neutral atoms.
It would be valuable these notions of transports to distinguish different types of flying qubits such as ions, atoms, electrons, photons or even qubits loaded on a cargo ship~\cite{devitt2016high} and to identify more precise requirements for our architecture.

It would be interesting to generalize this layout to other classes of quantum codes. 
For qubit arrays, \cite{siegel2024towards} layouts surface codes and generalized bicycle codes. 
The flat implementation of the sparse cyclic layout, discussed in \cref{appendix:Flat implementation}, is related to this generalized bicycle layout, with the difference that they use left and right moves instead of a cyclic shift.
An alternative implementation of BB codes in a qubit array, relying on Shor-style error correction which consumes more ancilla qubits, is optimized in \cite{micciche2025optimizing}.

As explained in the introduction, quantum LDPC codes are generally hard to partition because of their expansion.
A related result is the following.
Using a finite dimensional grid of qubits with local gates without the cyclic shift, a constant depth syndrome extraction circuit cannot exist if the code's Tanner graph is locally expanding~\cite{delfosse2021bounds}.
Graph expansion is also used to establish bounds on LDPC codes' parameters~\cite{baspin2022connectivity}.
It would be interesting to understand the impact of the ability to perform a cyclic shift, and more generally the impact of flying qubits, on these bounds and other bounds on codes and logical operations~\cite{bravyi2010tradeoffs, bravyi2013classification, pastawski2015fault, jochym2018disjointness}.

\section{Acknowledgment}

We thank Jeremy Sage, Dave Wecker, Matthew Parrott, Jason Amini for their insightful discussions and for their comments on a preliminary version of this work.

\appendix

\section{Flat implementation}
\label{appendix:Flat implementation}

\begin{algorithm}
\DontPrintSemicolon
\SetAlgoLined
\KwIn{A BB code with code length $N$.\;}
\KwOut{A circuit measuring the $X$ stabilizer generators of the input code over the $2 \times N$ qubit array.\;}
Prepare all the $X$ ancilla qubits in the state $\ket +$.\;
\For{$j \in J(\polyA)$}{
    Apply the cyclic shift aligning $\ancillamodule_{0}$ and $\datamodule_{j}$.\;
    \For{$i \in \Z_m$ such that $x^i y^j$ appears in $\polyA$}{
        \For{$w \in \Z_m$}{
            Apply the intra-module cyclic shift (with period $2\ell$) in module $\ancillamodule_{w}$ aligning qubit $(X,0,w)$ with qubit $(0, i, w \oplus j)$.
        }
        \For{$v, w \in G_{\ell, m}$}{
        Apply the CX gate controlled on qubit $(X,v,w)$ targeting qubit $(0, v \oplus i, w \oplus j)$.\;
        }
    }
}
\For{$j \in J(\polyB)$}{
    Apply the cyclic shift aligning $\ancillamodule_{0}$ and $\datamodule_{j}$.\;
    \For{$i \in \Z_m$ such that $x^i y^j$ appears in $\polyB$}{
        \For{$w \in \Z_m$}{
            Apply the intra-module cyclic shift (with period $2\ell$) in module $\ancillamodule_{w}$ aligning qubit $(X,0,w)$ with qubit $(1, i, w \oplus j)$.
        }
        \For{$v, w \in G_{\ell, m}$}{
        Apply the CX gate controlled on qubit $(X,v,w)$ targeting qubit $(1, v \oplus i, w \oplus j)$.\;
        }
    }
}
Measure all the ancilla qubits in the $X$ basis.\;
\caption{Flat cyclic layout for BB codes.}
\label{algorithm:flat_layout}
\end{algorithm}

\begin{figure}[htbp]
    \centering
    \includegraphics[width=0.95\linewidth]{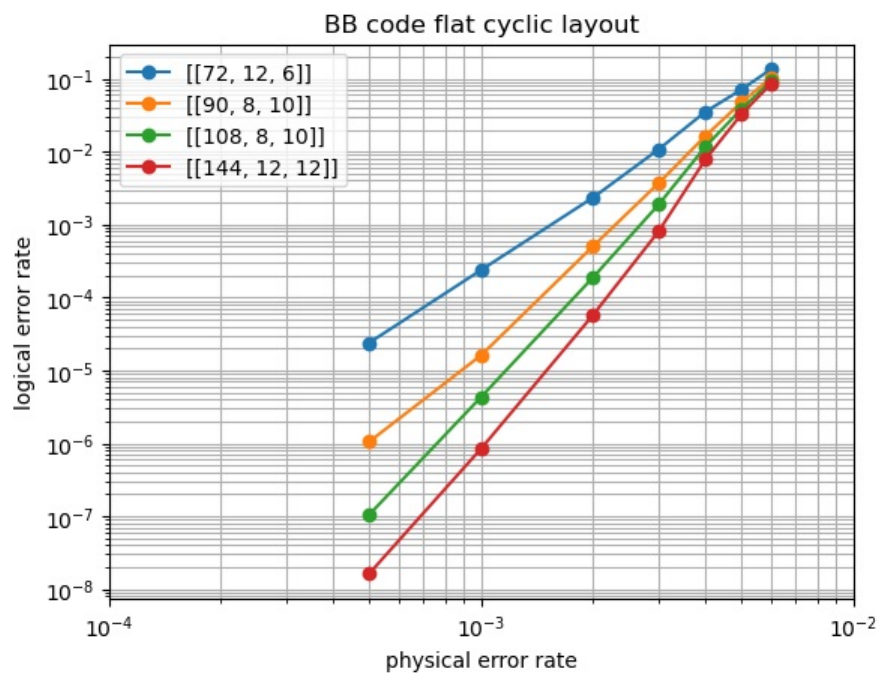}
    \caption{Logical error rate of BB codes with the flat cyclic layout of \cref{algorithm:flat_layout}.}
    \label{fig:simulation_results_flat}
\end{figure}

An alternative to long chains is to implement each $n$-qubit module as a one-dimensional array of $n$ qubits.
When two such modules are aligned, the CX gates on the $n$ aligned pairs of qubits can be executed simultaneously.
Moreover, we assume that an intra-module cyclic shift with period $n$ is available as shown in \cref{fig:flat_2x3_module_array}.
Each of these modules can be built with the approach described in \cref{sec:physical implementation} using flying qubits.

\begin{figure}[htbp]
    \centering
    \includegraphics[width=.9\linewidth]{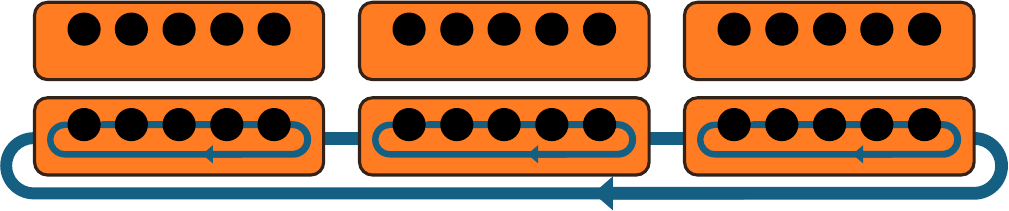}
    \caption{A flat $2 \times 3$ module array with 5-qubit modules equipped with a cyclic shift of the modules and intra-module cyclic shifts.}
    \label{fig:flat_2x3_module_array}
\end{figure}

To implement \cref{algorithm:sparse_cyclic_layout} with such flat modules, we set $n=2\ell$, and the qubits of $\datamodule_{w}$ are placed in a one-dimensional array in the following order
\begin{align*}
    (0,0,w), (1,0,w), (0,1,w), (1,1,w), \dots, (1,\ell-1, w)
\end{align*}
alternating between left and right data qubits.
The ancilla modules $\ancillamodule_{w}$ are built similarly, aligning $(X,v,w)$ with $(0,v,w)$ and $(Z,v,w)$ with $(1,v,w)$.

To execute \cref{algorithm:sparse_cyclic_layout} with flat modules, a round of intra-module cyclic shifts must be inserted before each round of CX gates, resulting in a flat implementation of \cref{algorithm:sparse_cyclic_layout} over a $2 \times N$ qubit array equipped with a global cyclic shift with period $N$ and intra-module cyclic shifts with period $2\ell$.

The flat implementation, whose pseudo-code is provided in \cref{algorithm:flat_layout}, uses more cyclic shifts (up to two for each monomial) than the long chain implementation but fewer rounds of CX gates because the CX gates associated with each monomial can be implemented simultaneously.
Precisely, the depth of the $X$ syndrome extraction circuit is at most 
$
|J(\polyA) \cup J(\polyB)| + \omega + 2
$
in the long chain case and at most
$
3\omega + 2
$
in the flat case.

The performance of BB codes with the syndrome extraction circuit of \cref{algorithm:flat_layout} when modules are one-dimensional arrays of ions is shown in \cref{fig:simulation_results_flat}. 
The simulation uses the ion chain model of~\cite{ye2025quantum} with single-qubit chains (merged into two-qubit chain for the duration of a two-qubit gate). We use $\tau_m = 30$ and $\tau_s = 10$ to simulate noisy operations.
We set $\tau_s=10$ here as opposed to $\tau_s=30$ for the long ion-chain module in \cref{sec:physical implementation} to reflect the faster transport of single-qubit ion chains.
We observe that BB codes exhibit slightly better performance under the flat cyclic layout than the sparse cyclic layout in \cref{fig:simulation_results_long_chain}.
In \cref{appendix:fitting_formulas}, we provide a fitting formula for the logical error rate of BB codes under the flat cyclic layout.

For convenience, we described the flat layout in terms of the cyclic shifts used throughout this paper.
However, these cyclic shifts could be replaced by left and right moves of the moving row, resulting in a properly one-dimensional implementation the flat cyclic layout.

\section{Impact of modularity}
\label{appendix:Impact of modularity}

\begin{figure}[htbp]
    \centering
    \includegraphics[width=0.95\linewidth]{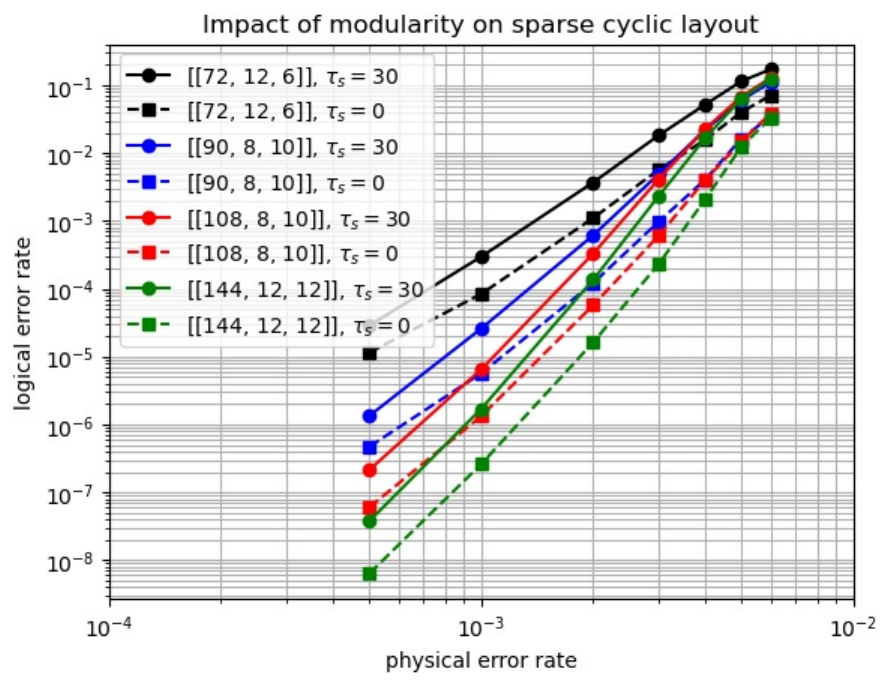}
    \caption{The 4 solid-line curves are exactly the same as the 4 curves in \cref{fig:simulation_results_long_chain}. The 4 dashed-line curves are obtained by setting $\tau_s=0$, which means that there is no noise associated with cyclic shifts.}
    \label{fig:simulation_results_modularity}
\end{figure}

The modular, or distributed, nature of our model is reflected by the necessity of aligning different modules with cyclic shifts in order to apply two-qubit gates across them. In circuit-level simulations of \cref{fig:simulation_results_long_chain}, these shifts are assumed to induce $\tau_s=30$ rounds of idle noise on all qubits, consequently increasing the logical error rate of the quantum error correction scheme.
To measure the impact of modularity on the code performance, we simulate the BB codes in the same setting as in \cref{fig:simulation_results_long_chain} but with $\tau_s = 0$, making the cyclic shifts noiseless. The performance comparison between $\tau_s=30$ and $\tau_s=0$ is given in \cref{fig:simulation_results_modularity} for 4 BB code instances. 
We use $p_{\log}(p, \tau_s=30)$ to denote the logical error rate on the curve $\tau_s=30$ at physical error rate $p$, and we define $p_{\log}(p, \tau_s=0)$ in a similar way. It is clear from \cref{fig:simulation_results_modularity} that $p_{\log}(p, \tau_s=30) < p_{\log}(2p, \tau_s=0)$ for all physical error rate $p$ and all 4 BB code instances. Therefore, in order for the noisy cyclic shift model to achieve the same logical error rate as the noiseless cyclic shift model, we only need to decrease the physical error rate by a factor of at most 2. In other words, the impact of modularity is a factor of at most 2 on the physical error rate.

\section{Interleaved version of the sparse cyclic layout}
\label{appendix:interleaved implementation}

\cref{sec:The sparse cyclic layout} describes the measurement of the $X$ stabilizer generators of the BB codes.
Applying \cref{algorithm:sparse_cyclic_layout} twice -- once for each stabilizer type -- is sufficient to fully implement the syndrome extraction circuit.
This section describes variants of \cref{algorithm:sparse_cyclic_layout} that fully utilizes all $2\ell m$ ancilla qubits, in order to perform $X$ and $Z$ stabilizer measurements concurrently, leading to shorter circuit depths.

We begin with \cref{algorithm:IBMinterleavedBB6}, which is a modification of \cref{algorithm:sparse_cyclic_layout} to implement measurement of all stabilizer generators following a specific order.
Therein, $\mu$ is a set of 4-tuples $(u_z, u_x, \mathcal{Q}_z, \mathcal{Q}_x)$ that encapsulates a particular ordering of gates and cyclic shifts, where $u_z,u_x\in\mathbb{Z}_2$ while $Q_z$, $Q_x$ are either monomial constituents of $\polyA$, $\polyB$ or $-$ (indicating no associated operations for that step in $\mu$).

The overall depth of \cref{algorithm:IBMinterleavedBB6} is made significantly shorter with a modest generalization of our $2\times m$ model to $3\times m$.
In this generalization, each ancilla module $M_w^a$ is split into two modules each of size $\ell$; therein, the ancilla qubits are divided between the smaller modules as: $(X,v,w)\in M_w^{a_x}$ and $(Z,v,w)\in M_w^{a_z}$ for $v\in \mathbb{Z}_\ell$.
Furthermore, modules $M_w^{a_x}$ and $M_w^{a_z}$ are placed in cell $w$ of two distinct moving rows, and each moving row can undergo simultaneous and independent cyclic shifts.
Note that such a generalization makes it possible for the cyclic shifts of steps~4~and~10 of \cref{algorithm:IBMinterleavedBB6} to occur simultaneously.
Furthermore, if the sequence $\mu$ is appropriately chosen, gates of steps~6~and~12 may also occur concurrently.

\begin{algorithm}
\DontPrintSemicolon
\SetAlgoLined
\KwIn{A BB code, and an explicit ordering $\mu$ of syndrome gates.\;}
\KwOut{A circuit measuring all stabilizer generators of the input code.\;}
Prepare all ancilla qubits in the state $\ket +$.\;
\For{$u_z,u_x,\mathcal{Q}_z,\mathcal{Q}_x \in \mu$}{
    \If{$\mathcal{Q}_z$ is a monomial}{
    Apply the cyclic shift aligning $M_w^{a_z}$ and $\datamodule_{j}$.\;
    \For{$v, w \in G_{\ell, m}$}{
        Apply the CZ gate controlled on qubit $(Z,v,w)$ targeting qubit $(u_z, v \oplus h, w \oplus j)$.\;
    }}
    \ElseIf{$\mu$ is exhausted}{
    Measure \& reset all ancilla qubits $(Z,*,*)$.\;
    }
    \If{$\mathcal{Q}_x$ is a monomial}{
    Apply the cyclic shift aligning $M_w^{a_x}$ and $\datamodule_{k}$.\;
    \For{$v, w \in G_{\ell, m}$}{
    Apply the CX gate controlled on qubit $(X,v,w)$ targeting qubit $(u_x, v \oplus i, w \oplus j)$.\;
    }}
    \ElseIf{$\mu$ is exhausted}{
    Measure \& reset all ancilla qubits $(X,*,*)$.\;
    }
}
\caption{Sparse interleaved cyclic layout for BB codes.}
\label{algorithm:IBMinterleavedBB6}
\end{algorithm}

\begin{table*}
    \begin{centering}
    \begin{tabular}{|c|c|c|c|c|}
    \hline 
    \multirow{2}{*}{Layout Variant} & \multicolumn{3}{c|}{Depths for $T$ syndrome rounds} & \multirow{2}{*}{Amortized depth per round}\tabularnewline
    \cline{2-4} \cline{3-4} \cline{4-4} 
    & 2q Gates & Cyclic shifts  & Meas.+Reset & \tabularnewline
    \hline 
    \hline 
    Algo-\ref{algorithm:sparse_cyclic_layout}  & $2\omega T$  & $2T\left|J(\mathcal{A})\cup J(\mathcal{B})\right|$  & $4T$  & $2\left|J(\mathcal{A})\cup J(\mathcal{B})\right|+2\omega+4$\tabularnewline
    \hline 
    Algo-\ref{algorithm:flat_layout}  & $2\omega T$  & $4\omega T$  & $4T$  & $6\omega+4$\tabularnewline
    \hline 
    Algo-\ref{algorithm:IBMinterleavedBB6} (interleaved gates) & $\omega T+1$  & $\omega T+1$  & $2T$  & $2\omega+2$\tabularnewline
    \hline 
    Algo-\ref{algorithm:IBMinterleavedBB6} (concurrent rounds) & $\omega T+\omega$  & $T\left|J(\mathcal{A})\cup J(\mathcal{B})\right|+\left|J(\mathcal{A})\right|$  & $2T$  & $\left|J(\mathcal{A})\cup J(\mathcal{B})\right|+\omega+2$\tabularnewline
    \hline 
    \end{tabular}
    \par\end{centering}
    \caption{Table comparing variants of the sparse layouts of \cref{algorithm:sparse_cyclic_layout} and \cref{algorithm:IBMinterleavedBB6}.}
    \label{tab:DepthComparisons} 
\end{table*}

We define the {\em interleaved gates} layout to be an instance of \cref{algorithm:IBMinterleavedBB6}, with the gate-ordering of \cite{bravyi2024high}.
Writing polynomials of the BB code as $\polyA=\sum_{j=0}^2 A_j$ and $\polyB=\sum_{j=0}^2 B_j$, the interleaved gates layout is given by the following tuple sequence:
\begin{align*} 
    \mu = & \Big\{(1,-,A_{0}^{T},-),(1,0,A_{2}^{T},A_{1}),(0,1,B_{0}^{T},B_{1}),\\
    & (0,1,B_{1}^{T},B_{0}),(0,1,B_{2}^{T},B_{2}),(1,0,A_{1}^{T},A_{0}),\\
    & (-,0,-,A_{2})\Big\}
\end{align*}
It is accepted folklore that syndrome extraction circuits interleaving gates from $X$ and $Z$ stabilizer measurements in this way generally exhibit better logical error rates compared to circuits that implement $X$ and $Z$ stabilizer measurements non-concurrently.
Such a gate-ordering completes each BB code syndrome round in gate depth and cyclic-shift depth both $|\mu|-1 = \omega = 6$.
Gates associated with the last tuple can occur concurrently with those of the first tuple of a subsequent syndrome round, and the very first cyclic shift for the first tuple of $\mu$ is amortized over many syndrome rounds.

We also define the {\em concurrent rounds} layout as another instance of \cref{algorithm:IBMinterleavedBB6}, with the following tuple sequence $\mu=\mu_Z\cup \mu_{ZX}\cup \mu_X$ and:
\begin{align*}
    \mu_{Z} & =\bigcup_{A\in\mathcal{A}}\left\{ (1,-,A^{T},-)\right\} \\
    \mu_{ZX} & =\bigcup_{B\in\mathcal{B}}\left\{ (0,1,B^{T},B)\right\} \\
    \mu_{X} & =\bigcup_{A\in\mathcal{A}}\left\{ (-,0,-,A)\right\} 
\end{align*}
Observe that qubits with assigned actions in $\mu_Z$ and $\mu_X$ are non-overlapping.
Therefore ancillae qubits in $M^{a_z}$, which have no assigned action in $\mu_X$ during the trailing iterations of \cref{algorithm:IBMinterleavedBB6} for a current syndrome round, can be measured and reset to begin executing gates in $\mu_X$ for a subsequent syndrome round.

The ordering of operations in the concurrent rounds layout can be more flexible than that of the interleaved gates, since the ordering of monomials is entirely arbitrary in the construction of $\mu_Z$, $\mu_{ZX}$, and $\mu_X$.
For instance, the same ordering of monomials as in \cref{algorithm:sparse_cyclic_layout} can be chosen.
Except, in this concurrent rounds layout, only 1/2 as many cyclic shift steps is needed per syndrome round (with costs of executing operations of $\mu_Z$ in the very first round being amortized over many syndrome rounds).

\Cref{tab:DepthComparisons} summarizes variations of \cref{algorithm:sparse_cyclic_layout,algorithm:flat_layout,algorithm:IBMinterleavedBB6}.
We show depths incurred by 2-qubit gates, cyclic shifts, and measurement operations, \emph{disregarding} the particular physical constraints of \cref{sec:physical implementation} ({\em e.g.} on gate parallelism) as used in our numerical simulations.
Also shown is the overall depth per round, amortized over many syndrome rounds.
Notably, the {\em interleaved gates} and {\em concurrent rounds} layout of this section lower the circuit depth of \cref{algorithm:sparse_cyclic_layout,algorithm:flat_layout} by up to $2\times$.

\section{Fitting formulas for BB codes under the sparse cyclic layout and flat cyclic layout}
\label{appendix:fitting_formulas}

\begin{table}
    \centering
    \begin{tabular}{|c|c|c|c|}
    \hline
     $[[n,k,d]]$, layout & $c_0$ & $c_1$ & $c_2$   \\
     \hline
     $[[72, 12, 6]]$, sparse cyclic & 12.002 & 674.98 & -67694  \\
     \hline
     $[[90, 8, 10]]$, sparse cyclic & 24.397 & -290.59 & 24215  \\
     \hline
     $[[108, 8, 10]]$, sparse cyclic & 22.137 & 683.86 & -72746  \\
     \hline
     $[[144, 12, 12]]$, sparse cyclic & 28.049 & 375.30 & -42586  \\
     \hline
     $[[72, 12, 6]]$, flat cyclic & 11.963 & 408.55 & -29498  \\
     \hline
     $[[90, 8, 10]]$, flat cyclic & 24.105 & -325.04 & 34571  \\
     \hline
     $[[108, 8, 10]]$, flat cyclic & 21.678 & 522.45 & -43848  \\
     \hline
     $[[144, 12, 12]]$, flat cyclic & 27.422 & 140.49 & 3216.1  \\
     \hline
    \end{tabular}
    \caption{Constants in the fitting formula for the logical error rate of BB codes  $p_L=p^{d/2}e^{c_0+c_1p+c_2p^2}$ under the sparse cyclic layout in \cref{algorithm:sparse_cyclic_layout} and the flat cyclic layout in \cref{algorithm:flat_layout}.}
    \label{tab:fitting_formulas}
\end{table}

In this paper, by logical error rate we mean logical error rate per syndrome extraction round, not normalized by the number of logical qubits.
It is estimate using the same procedure as in \cite{ye2025quantum}.

Fitting formulas for logical error rates of surface codes and BB codes were studied under the circuit model with parallel gate operations and uniform noise rates \cite{fowler2012surface,bravyi2013simulation, bravyi2024high}.
For the ion chain model, the authors of \cite{ye2025quantum} also provided fitting formulas for surface codes and BB5 codes introduced in that paper.

Here we use the formula $p_L=p^{d/2}e^{c_0+c_1p+c_2p^2}$ to fit the logical error rate of BB codes under the sparse cyclic layout in \cref{algorithm:sparse_cyclic_layout} and the flat cyclic layout in \cref{algorithm:flat_layout}, where $d$ is the code distance, $p_L$ is the logical error rate, and $p$ is the physical error rate. The constants for the 4 BB code instances under the two different layouts are listed in \cref{tab:fitting_formulas}.


%

\end{document}